\newcommand{\keywords}[1]{\par\addvspace\baselineskip
\noindent\keywordname\enspace\ignorespaces#1}
\newcommand{\eat}[1]{}
\newtheorem{assump}{Assumption}
\begin{document}

\mainmatter  

\title{A Time and Space Efficient Algorithm \\for Contextual Linear Bandits} 

\titlerunning{A Time and Space Efficient Algorithm for Contextual Linear Bandits}

%
%
\author{Jos\'e Bento$^{1}$ \and Stratis Ioannidis$^{2}$  \and S. Muthukrishnan$^{3}$  \and Jinyun Yan$^{3}$}
\authorrunning{Jos\'e Bento \and Stratis Ioannidis  \and S. Muthukrishnan  \and Jinyun Yan}

\institute{$^{1}$Stanford University,  jbento@stanford.edu\\$^{2}$Technicolor, stratis.ioannidis@technicolor.com\\$^{3}$Rutgers University, $\{$muthu, jinyuny$\}$@cs.rutgers.edu  
 }
%
%
\newcommand{\naturals}{\ensuremath{\mathbbm{N}}}
\newcommand{\iden}{\ensuremath{\mathbbm{I}}}
\newcommand{\reals}{\ensuremath{\mathbbm{R}}}
\newcommand{\xset}{\ensuremath{\mathbbm{X}}}
\newcommand{\expect}{\ensuremath{\mathbbm{E}}}
\newcommand{\id}{\ensuremath{\mathbbm{1}}}
\newcommand{\argmax}{\operatornamewithlimits{arg\,max}}
\newcommand{\snote}[1]{\textcolor{blue}{#1}}

\newcommand{\junk}[1]{}
\newcommand{\myset}[1]{\mathcal{#1}}
\newcommand{\T}{\mathcal{T}}
\newcommand{\TaT}{\T_{a,T}}
\newcommand{\Tat}{\T_{a,t}}
\newcommand{\Tatm}{\T_{a,t-1}}

\newcommand{\Gar}{\Gamma_{a,r}}
\newcommand{\Gat}{\Gamma_{a,t}}
\newcommand{\Gan}{\Gamma_{a,n}}

\newcommand{\prob}{\mathbb{P}}

\newcommand{\E}{\mathbb{E}}

\newcommand{\TbT}{\T_{b,T}}
\newcommand{\Tbt}{\T_{b,t}}
\newcommand{\asXt}{a^*_{X_t}}

\newcommand{\Tast}{\T_{\asXt,t}}
\newcommand{\Tastm}{\T_{\asXt,t-1}}

\newcommand{\G}{\mathcal{G}}
\newcommand{\sS}{\mathcal{S}}
\newcommand{\A}{\mathcal{A}}
\newcommand{\X}{\mathcal{X}}
\newcommand{\cH}{\mathcal{H}}

\maketitle
\vspace*{-0.4cm}
\begin{abstract}
We consider a multi-armed bandit problem where payoffs 
are a linear function of an observed stochastic contextual variable.
In the scenario where there exists a gap between optimal and suboptimal rewards,
several algorithms have been proposed that achieve $O(\log T)$ regret after $T$ time steps.
However, proposed methods either have a computation complexity per iteration that 
scales linearly with $T$ or achieve regrets that grow linearly with the number of contexts $|\myset{X}|$.
We propose an $\epsilon$-greedy type of algorithm that solves both limitations.
In particular, when contexts are variables in $\reals^d$,
we prove that our  algorithm
has a constant computation complexity per iteration of $O(poly(d))$ and
can achieve a regret of $O(poly(d) \log T)$
even when $|\myset{X}| = \Omega (2^d) $.
In addition, unlike previous algorithms,
its space complexity scales like $O(Kd^2)$ and does not grow with $T$.
\keywords{ Contextual Linear Bandits, Space and Time Efficiency}
\end{abstract}

\section{Introduction}

The contextual multi-armed bandit problem is a sequential learning problem~\cite{langford2007epoch,dudik2011efficient}.  At each time step, a learner has to chose among a set of possible actions/arms $\myset{A}$. Prior to making its decision, the learner observes some additional side information $x\in\myset{X}$ over which he has no influence. This is commonly referred to as the \emph{context}.
In general, the reward of a particular arm $a\in \myset{A}$ under context $x\in\myset{X}$ follows some unknown distribution. The goal of the learner is to select arms so that it minimizes its expected \emph{regret}, \emph{i.e.}, the expected difference between its cumulative reward and the reward accrued by an optimal policy, that knows the reward distributions.

Langford and Zhang~\cite{langford2007epoch} propose an algorithm called \emph{epoch-Greedy} for general contextual bandits. Their algorithm  achieves an $O(\log T)$ regret in the number of timesteps $T$ in the \emph{stochastic} setting, in which contexts are sampled from an unknown distribution in an i.i.d.~fashion. 
Unfortunately, the proposed algorithm and subsequent improvements \cite{dudik2011efficient} have high computational complexity.  Selecting an arm at time step $t$ requires making a  number of calls to a so-called \emph{optimization oracle} that grows polynomially in $T$. In addition, the cost of an implementation of this optimization oracle can grow linearly in $|\myset{X}|$ in the worst case; this is prohibitive in many interesting cases, including the case where $|\myset{X}|$ is exponential in the dimension of the context. In addition, both algorithms proposed in \cite{langford2007epoch} and \cite{dudik2011efficient} require keeping a history of observed contexts and arms chosen at every time instant. Hence, their space complexity grows linearly in $T$. 

In this paper, we show that the challenges above can be addressed  when rewards are  linear. 
In the above contextual bandit set up, this means that $\myset{X}$ is a subset of $\reals^d$, and the expected reward of an arm $a\in \myset{A}$ is an unknown linear function of the context $x$, \emph{i.e.}, it has the form  $x^\dagger\theta_a$, for some unknown vector $\theta_a$.
This is a case of  great interest, arising naturally when, conditioned on $x$, rewards from different arms are uncorrelated:

\medskip
\noindent
{\bf Example 1.} {\em (Processor Scheduling)}
 A simple example is assigning incoming jobs to a set  of processors $\myset{A}$, whose processing capabilities are not known {\em  a priori}. This could be the case if, \emph{e.g.}, the processors are machines in the cloud or, alternatively, humans offering their services through, \emph{e.g.}, Mechanical Turk. Each arriving job is described by a set of attributes $x\in \reals^d$, each capturing the work load of different types of sub-tasks this job entails, \emph{e.g.},~computation, I/O, network communication, \emph{etc.} Each processor's unknown feature vector $\theta_a$ describes its processing capacity,  \emph{i.e.}, the time to complete a sub-task unit, in expectation. The expected time to complete a task $x$ is given by  $x^\dagger \theta_a$; the goal of minimizing the delay (or, equivalently, maximizing its negation) brings us in the contextual bandit setting with linear rewards.
\hfill $\square$

\medskip
\noindent
{\bf Example 2.} {\em (Display Ad Placement)}
In the online ad placement problem, online users are visiting a website, which must decide which ad to show them selected from a set $\myset{A}$. Each online user visiting the website is described by a set of attributes $x\in\reals^d$ capturing, \emph{e.g.}, its geo-location, its previous viewing history, or any information available through a tracking service like BlueKai.  Each ad $a\in\myset{A}$ has a probability of being clicked that is of the form $x^\dagger \theta_a$, where $\theta_a\in\reals^d$ an unknown vector describing each ad. The system objective is to maximize the number of clicks, falling again under the above contextual bandit setting.
\hfill $\square$

\medskip
\noindent
{\bf Example 3.} {\em (Group Activity Selection)}
Another motivating example is maximizing group satisfaction, observed as the outcome of a secret ballot election. In this setup, a subset of $d$ users congregate to perform a joint activity, such as, \emph{e.g.}, dining, rock climbing,  watching a movie, \emph{etc.} The group is dynamic and, at each time step, the vector $x\in \{0,1\}^d$, is an indicator of  present participants. An arm (\emph{i.e.}, a joint activity) is selected; at the end of the activity, each user votes whether they liked the activity or not in a secret ballot, and the final tally is disclosed. In this scenario, the unknown vectors $\theta_a\in \reals^d$ indicate the probability a given participant will enjoy activity $a$, and the goal is to select activities that maximize the aggregate satisfaction among participants present at the given time step.
\hfill $\square$


Our contributions are as follows.
\begin{itemize}
\item
We isolate and focus on linear payoff case of stochastic multi-armed bandit problems, and design a simple arm selection policy which does not recourse to sophisticated oracles inherent in prior work. 
\item
We prove that our policy achieves an $O(\log T)$ regret after $T$ steps in the stochastic setting,  when the expected rewards of each arm are well separated.  This meets the regret bound of best known algorithms for contextual multi-armed bandit problems. In addition, for many natural scenarios, it scales as $O(poly(d) \log T)$, which we believe
we are the first to prove under arm separation and for an efficient algorithm.
\item
We show that our algorithm has $O(|\A| d^3 )$ computational complexity per step and its expected space complexity scales like $O(| \A| d^2  )$. For algorithms that achieve similar
regrets, this is a significant improvement over known contextual multi-armed bandit problems, as well as for bandits specialized for linear payoffs. 
\end{itemize}

Our algorithm is inspired by the work of \cite{auer2002finite} on the $\epsilon$-greedy algorithm and the use of linear regression to estimate the parameters $\theta_a$. The main technical innovation is the use of matrix concentration bounds to control the error of the estimates of $\theta_a$ in the stochastic setting. We believe that this is a powerful realization and may ultimately help us analyze richer classes of payoff functions.

The remainder of this paper is organized as follows: in Section \ref{sec:related} we compare our results with existing literature. In Section \ref{sec:setup} we describe the set up of our problem in more detail. In Section \ref{sec:main_results} we state our main results and prove them in Section  \ref{sec:proofs}.  Section \ref{sec:numerical} is devoted to exemplifying the performance and limitations of our algorithm by means of simple numerical simulations. We discuss challenges in dealing with an adversarial setting in Section~\ref{sec:discussion}  and draw our conclusions in Section \ref{sec:conclusion}.

\section{Related Work} \label{sec:related}

The original paper by Langford and Zhang~\cite{langford2007epoch} assumes that the context $x\in \myset{X}$ is sampled from a probability distribution $p(x)$ and that, given an arm $a \in \myset{A}$, and conditioned on the context $x$, rewards $r$ are sampled from a probability distribution $p_a(r \mid x)$. As is common in bandit problems, there is a tradeoff between \emph{exploration},  \emph{i.e.}, selecting arms to sample rewards from the distributions $\{p_a(r\mid x)\}_{a\in \myset A}$ and learn about them, and \emph{exploitation}, whereby knowledge of these distributions based on the samples is used to select an arm that yields a high payoff. 

In this setup, a significant challenge is that, though contexts $x$ are sampled independently, they are not independent conditioned on the arm played: an arm will tend to be selected more often in contexts in which it performs well.  Hence, learning the distributions $\{p_a(r\mid x)\}_{a\in \myset{A}}$ from such samples is difficult. 
The epoch-Greedy algorithm~\cite{langford2007epoch} deals with this by separating the exploration and exploitation phase, effectively selecting an arm uniformly at random at certain time slots (the exploration ``epochs''), and using samples collected only during these epochs to estimate the payoff of each arm in the remaining time slots (for exploitation). 
Our algorithm uses the same separation in ``epochs''.
Langford and Zhang \cite{langford2007epoch} establish an $O(T^{2/3}(\ln |\myset{X}|)^{1/3})$ bound on the regret for epoch-Greedy in their stochastic setting. They further improve this to $O(\log T)$ when  a lower bound on the gap between optimal and suboptimal arms in each context exists, \emph{i.e.}, under \emph{arm separation}.

Unfortunately, the price of the generality of the framework in \cite{langford2007epoch} is the high computational complexity when selecting an arm during an exploitation phase. In a recent improvement \cite{dudik2011efficient}, this computation requires a $poly(T)$ number of calls to an optimization oracle. Most importantly, even in the linear case we study here, there is no clear way to implement this oracle in sub-exponential time in $d$, the dimension of the context. As  Dudik \emph{et al.} \cite{dudik2011efficient} point out, the optimization oracle solves a so-called cost-sensitive classification problem. In the particular case of linear bandits, the oracle thus reduces to finding the ``least-costly'' linear classifier. This is hard, even in the case of only two arms: finding the linear classifier with the minimal number of errors is NP-hard \cite{johnson1978}, and remains NP hard even if an approximate solution is required \cite{bartlett99}.  As such, a different approach is warranted under linear rewards. 

Contextual bandits with linear rewards is a special case of the classic linear bandit setup~\cite{auer2002tradeoffs,chu2011contextual,li2010contextual,rusmevichientong2008linearly}. In this setup, the arms themselves are represented as vectors, \emph{i.e.}, $\myset{A}\subset \reals^d$, and, in addition, the set $\myset{A}$ can change from one time slot to the next. The expected payoff of an arm $a$ with vector $x_a$ is given by $x_a^\dagger \theta$, for some unknown vector $\theta\in\reals^d$, \emph{common among all arms}. 

There are several different variants of the above linear model. Auer \cite{auer2002tradeoffs}, Li \emph{et al.}~\cite{li2010contextual}, and
Chu \emph{et al.} \cite{chu2011contextual}, and Li a study this problem in the adversarial setting, assuming  a finite number of arms $|A|$. In the adversarial setting, contexts are not sampled i.i.d. from a distribution but can be an arbitrary sequence, for example, chosen by an adversary that has knowledge of the algorithm and its state variables.
Both algorithms studied, LinRel and LinUCB, are similar to ours in that they use an upper confidence bound and both estimate the unknown parameters for the linear model using a least-square-error type method. In addition, both methods apply some sort of regularization. LinRel does it by truncating the eigenvalues of a certain matrix and LinUCB by using ridge regression. In the adversarial setting, and with no arm separation, the regret bounds obtained of the form $O(\sqrt{T} polylog(T))$. 

Dani \emph{et al.}  \cite{dani2008stochastic}, Rusmevichientong and Tsitsiklis \cite{rusmevichientong2008linearly}, and Abbasi-Yadkori \emph{et al.}~\cite{abbasi2011improved} study
 contextual linear bandits in the stochastic setting, in the case where $\myset{A}$ is a fixed but possibly uncountable bounded subset of $\reals^d$. 
 Dani \emph{et al.} \cite{dani2008stochastic}  obtain regret bounds of  $O(\sqrt{T})$ for an infinite number of arms; under arm separation, by introducing a gap constant $\Delta$, their bound is  $O(d^2 (\log T)^3)$.  Rusmevichientong and Tsitsiklis \cite{rusmevichientong2008linearly} also study the regret under arm separation and obtain a $O(\log(T))$ bound    that depends exponentially on $d$. Finally, Abbasi-Yadkori \emph{et al.}~\cite{abbasi2011improved} obtain a $O(poly(d)\log^2 (T))$ bound under arm separation.

Our problem can be expressed as a special case of the linear bandits setup by taking $\theta = [\theta_1;\ldots;\theta_K]\in\reals^{Kd}$, where $K=|\myset{A}|$, and, given context $x$, associating the $i$-th arm with an appropriate vector of the form $x_{a_i}=[0\ldots x \ldots 0]$. As such, all of the bounds described above \cite{auer2002tradeoffs,li2010contextual,chu2011contextual,dani2008stochastic,rusmevichientong2008linearly,abbasi2011improved}  can
be applied to our setup. However,  in our setting, arms are uncorrelated; the above algorithms do not exploit this fact. Our algorithm indeed exploits this to obtain a \emph{logarithmic} regret, while also scaling well in terms of the dimension $d$.


Several papers study contextual linear bandits under different notions of regret. For example, Dani \emph{et al.}~\cite{dani2007price}  define regret based on the worst sequence of loss vectors. In our setup, this corresponds to the rewards coming from an arbitrary temporal sequence and not from adding noise to $x^{\dagger} \theta_a$, resembling the `worst-case' regret definition of \cite{Auer95}. Abernethy \emph{et al.}~\cite{abernethy2008dark} assume a notion of regret with respect to a best choice fixed in time that the player can make from a fixed set of choices. However, in our case, the best choice changes with time $t$ via the current context. This different setup yields worse bounds  than the ones we seek: for both stochastic and adversarial setting the regret is $O(\sqrt{T}polylog(T))$.

Recent studies  on multi-class prediction using bandits~\cite{kakade2008,hazan2011,crammer2011} have some
connections to our work. In this setting, every context $x$ has an associated label $y$ that a learner tries to predict using a linear classifier of the type $\hat{y} = \arg \max_a \theta_a^{\dagger} x$.
Among algorithms proposed, the closest to ours is by Crammer and Gentile~\cite{crammer2011}, which uses an estimator for $\{\theta_a\}$ that is related to LinUCB, LinRel and our algorithm. However, the multi-class prediction problem differs in many ways from our setting. To learn the vectors $\theta_a$, the learner receives a one-bit feedback indicating whether the label predicted is correct (\emph{i.e.}, the arm was maximal) or not. In contrast, in our setting, the learner directly observes $\theta_a^{\dagger} x$, possibly perturbed by noise, without learning if it is maximal.

Finally, bandit algorithms relying on experts such as EXP4~\cite{auer2002exp4} and EXP4.P~\cite{beygelzimer2010contextual} can also be applied to our setting. These algorithms require a set of policies (experts) against which the regret is measured. Regret bounds grow as $\log^C N $, where $N$ is the number of experts and $C$ a constant. The trivial reduction of our problem to EXP4(.P) assigns an expert to each possible context-to-arm mapping. The $2^d$ contexts in our case lead to $K^{2^d}$ experts, an undesirable exponential growth of regret in $d$; a better choice of experts is a new problem in itself.

\section{Model}
\label{sec:setup}

In this section, we give a precise definition of our linear contextual bandit problem. 
 \paragraph{Contexts.}
At every time instant $t \in \{1,2, ... \}$, a context $x_t \in \X \subset \reals^d$, is observed by the learner. We assume that $\|x\|_2 \leq 1$; as the expected reward is linear in $x$, this assumption is without loss of generality (w.l.o.g.). We prove our main result (Theorem \ref{th:main_theorem}) in the stochastic setting where $x_t$ are drawn i.i.d.~from an unknown multivariate probability distribution $\mathcal{D}$. 
In addition, we require that the  set of contexts is finite \emph{i.e.}, $| \X | < \infty$.  We define $\Sigma_{\min} > 0$ to be the smallest non-zero eigenvalue of  the covariance matrix $\Sigma\equiv\E \{ x_1 x^{\dagger}_1\}$.  
\paragraph{Arms and Actions.}
At time $t$, after observing the context $x_t$, the learner decides to play an arm $a \in \A$, where $K \equiv |\A|$ is finite. We denote the arm played at this time by $a_t$. We study \emph{adaptive} arm selection policies, whereby the selection of $a_t$ depends only on the current context $x_t$, and on all past contexts, actions and rewards. In other words, $a_t = a_t\left(x_t,\{ x_{\tau}, a_{\tau}, r_{\tau} \}^{t-1}_{\tau = 1}\right)$.

\paragraph{Payoff.}
After observing a context $x_t$ and selecting an arm $a_t$, the learner receives a payoff $r_{a_t,x_t}$ which is drawn from a distribution $p_{a_t,x_t}$ independently of all past contexts, actions or payoffs. We assume that the expected payoff is a linear function of the context. In other words,
\begin{equation}
r_{a_t,x_t} = x^{\dagger}_t \theta_a + \epsilon_{a,t}
\end{equation} 
where $\{ \epsilon_{a,t} \}_{a \in \A, t \geq 1}$ are a set of independent random variables with zero mean and $\{ \theta_a\}_{a \in \A}$ are unknown parameters in $\reals^d$. 
Note that, w.l.o.g, we can assume that $Q=\max_{a\in\A}\| \theta_a \|_2\leq 1$. This is because if $Q>1$ , as payoffs are linear, we can divide all payoffs by $Q$; the resulting payoff is still a linear model, and our results stated below apply. Recall that $Z$ is a sub-gaussian random variable with constant $L$ if $\E \{ e^{\gamma Z }\} \leq e^{\gamma^2 L^2} $. In particular, sub-gaussianity implies $\E\{ Z\} = 0$.  We make the following technical assumption.
\begin{assump} \label{ass:first}
The random variables $\{ \epsilon_{a,t} \}_{a \in \A, t \geq 1}$ are sub-gaussian random variables with constant $L>0$. 
\end{assump}

\paragraph{Regret.}
Given a context $x$, the optimal arm is
$a^*_{x} = \argmax_{a \in \A} x^{\dagger} \theta_a.$
The expected cumulative regret the learner experiences over $T$ steps is defined by
\begin{equation}
R(T) = \E \Big \{ \sum^T_{t = 1} x_t^{\dagger} (\theta_{a^*_{x_t}} - \theta_{a_t}) \Big \}. 
\end{equation}
The expectation above is taken over the contexts $x_t$.
The objective of the learner is to design a policy $a_t = a_t\left(x_t,\{ x_{\tau}, a_{\tau}, r_{\tau} \}^{t-1}_{\tau = 1}\right)$ that achieves as low expected cumulative regret as possible. In this paper we are also interested in arm selection policies having a low computational complexity.
We define $\Delta_{\max} \equiv \max_{a,b\in \A} ||\theta_a-\theta_b||_2$, and  
$\Delta_{\min} \equiv \inf_{x \in \X, a: x^\dagger \theta_a<x^\dagger \theta_{a^*_x}} x^{\dagger} (\theta_{a^*_x} - \theta_a) > 0.$
Observe that, by the finiteness of $\X$ and $\A$, the defined infimum is attained (\emph{i.e.}, it is a minimum) and is indeed positive. 

\section{Main Results}
\label{sec:main_results}
We now present a simple and \emph{efficient} on-line algorithm that, under the above assumptions, has expected \emph{logarithmic} regret. Specifically, its computational complexity, at each time instant, is $O(K d^3 )$ and the expected memory requirement scales like $O(K d^2)$. As far as we know, our analysis is the first to show that a simple and \emph{efficient} algorithm for the problem of linearly parametrized bandits can, under reward separation and i.i.d. contexts, achieve logarithmic expected cumulative regret that
simultaneously can scale like $polylog(|\mathcal{X}|)$ for natural scenarios.

Before we present our algorithm in full detail, let us give some intuition about it. Part of the job of the learner is to estimate the unknown parameters $\theta_a$ based on past actions, contexts and rewards. We denote the estimate of $\theta_a$ at time $t$ by $\hat{\theta}_{a}$. If $\theta_a \approx \hat{\theta}_{a}$ then, given an observed context, the learner will more accurately know which arm to play  to incur in small regret. 
The estimates $\hat{\theta}_{a}$ can be constructed based on a history of past rewards, contexts and arms played. 
Since observing a reward $r$ for arm $a$ under context $x$ does not give information about the magnitude of $\theta_a$ along directions orthogonal to $x$, it is important that, for each arm, rewards are observed and recorded for a rich class of contexts. This gives rise to the following challenge:  If the learner tries to build this history while trying to minimize the regret, the distribution of contexts observed when playing a certain arm $a$ will be biased and potentially not rich enough. In particular, when trying to achieve a small regret, conditioned on $a_t = a$, it is more likely that $x_t$ is a context for which $a$ is optimal. 

We address this challenge using the following idea, also appearing in the epoch-Greedy algorithm of \cite{langford2007epoch}. We partition time slots into  \emph{exploration} and \emph{exploitation epochs}. In exploration epochs, the learner plays arms uniformly at random, independently of the context, and records the observed rewards. This guarantees that in the history of past events, each arm has been played along with a sufficiently rich set of contexts. In exploitation epochs, the learner makes use of the history of events stored during exploration to estimate the parameters $\theta_a$ and determine which arm to play given a current observed context. The rewards observed during exploitation are not recorded.

More specifically, when exploiting, the learner performs  two operations. In the first operation, for each arm $a \in \A$, an estimate $\hat{\theta}_{a}$ of $\theta_a$ is constructed from a simple $\ell_2$-regularized regression, as in in \cite{auer2002tradeoffs} and \cite{chu2011contextual}. In the second operation, the learner plays the arm $a$ that maximizes $x^{\dagger}_t \hat{\theta}_{a}$. Crucially, in the first operation, only information collected during exploration epochs is used.
  In particular, let $\T_{a,t-1}$ be the set of exploration epochs up to and including time $t-1$ (\emph{i.e.}, the times that the learner played an arm $a$ uniformly at random (u.a.r.)). Moreover, for any  $\T\subset \naturals$, denote by $r_{\T} \in \reals^n$ the vector of observed rewards for all time instances  $t\in\T$,   and $X_{\T} \in \reals^{n \times d}$ is a matrix of $\T$ rows, each  containing one of the observed contexts at time $t\in \T$.
 Then, at time  $t$
the estimator $\hat{\theta}_a$ is the solution of the following convex optimization problem. 
\begin{align}\label{minrloglik}
\min_{\theta \in \reals^d} \frac{1}{2 n} \| r_{\T} - X_{\T} \theta  \|^2_2 + \frac{\lambda_{n}}{2} \| \theta\|^2_2.
\end{align} 
where $\T = \T_{a,t-1}$, $n=|\T_{a,t-1}|$, $\lambda_{n} = 1/\sqrt{n}$. In other words, the estimator $\hat{\theta}_a$ is a (regularized) estimate of $\theta_a$, based only on observations made during exploration epochs.
Note that the solution to \eqref{minrloglik} is given by
$\hat{\theta}_a = \left(\lambda_nI+\frac{1}{n} X_{\T}^\dagger X_{\T}\right )^{-1} \frac{1}{n} X_{\T}^\dagger r_{\T}.$

\begin{algorithm}[!t]
\caption{Contextual $\epsilon$ -greedy}
\label{cegreedy}
\begin{algorithmic}
\STATE For all $a\in A$, set $A_a \leftarrow 0_{d\times d}$ ;$n_a\leftarrow 0$; $b_a\leftarrow 0_{d}$
\FOR{$t = 1$ to $p$}
\STATE $a \leftarrow 1 + (t \mod K)$; Play arm $a$
\STATE $n_a\leftarrow n_a+1$; $b_a \leftarrow b_a+r_tx_t$; $A_a\leftarrow A_a+ x_tx_t^\dagger$
\ENDFOR
\FOR{$t= p+1 $ to $T$}
\STATE $e \leftarrow \text{Bernoulli}(p/t)$
\IF{$e = 1$} 
\STATE $a \leftarrow \text{Uniform}(1/K)$ ; Play arm $a$
\STATE $n_a\leftarrow n_a+1$; $b_a \leftarrow b_a+r_tx_t$; $A_a\leftarrow A_a+ x_tx_t^\dagger$
\ELSE
\FOR{$a \in \A$}
\STATE Get $\hat{\theta}_a$ as the solution to the linear system: $\left( \lambda_{n_a} I + \frac{1}{n_a} A_a\right) \hat\theta_a =  \frac{1}{n_a} b_a$
\ENDFOR
\STATE Play arm $a_t = \arg \max_{a \in \myset{A}} x^{\dagger}_t \hat{\theta}_a$
\ENDIF
\ENDFOR 
\end{algorithmic}
\end{algorithm} 
An important design choice is the above process selection of the time slots at which the algorithm explores, rather than exploits.
Following the ideas of \cite{barto1998}, we select the exploration epochs so that they occur approximately $\Theta(\log t)$ times after $t$ slots. This guarantees that, at each time step, there is enough information in our history of past events to determine the parameters accurately while only incurring in a regret of $O(\log t)$. There are several ways of achieving this;  our algorithm explores at each time step with probability $\Theta(t^{-1})$. 

The above steps are summarized in pseudocode by Algorithm~\ref{cegreedy}. Note that the algorithm contains a scaling parameter $p$, which is specified below, in Theorem~\ref{th:main_theorem}. Because there are $K$ arms and for each arm  $(x_t, r_{a,t}) \in \reals^{d+1}$, the expected memory required by the algorithm scales like $O(K d^2)$. In addition, both the matrix $X^{\dagger}_{\T} X_{\T} $ and the vector $X^{\dagger}_{\T} r_{\T} $ can be computed in an online fashion in $O(d^2)$ time: $X^{\dagger}_{\T} X_{\T}  \leftarrow X^{\dagger}_{\T} X_{\T} + x_t x^{\dagger}_t$ and $X^{\dagger}_{\T} r_{\T} \leftarrow X^{\dagger}_{\T} r_{\T}  + r_{t} x_t $. Finally, the estimate of $\hat{\theta}_{a}$ requires solving a linear system (see Algorithm \ref{cegreedy}), which can be done in $O(d^3)$ time. The above is summarized in the following theorem.

\begin{theorem}
Algorithm \ref{cegreedy} has computational complexity of $O(K d^3)$ per iteration and its expected space complexity scales like $O(K d^2 )$.
\end{theorem}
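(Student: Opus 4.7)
The statement is essentially a bookkeeping result about Algorithm~\ref{cegreedy}, so my plan is to walk through the per-iteration operations and the persistent state and bound each contribution.

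For the space bound, I would observe that all state maintained across iterations is stored per arm: a counter $n_a \in \naturals$, a vector $b_a \in \reals^d$, and a matrix $A_a \in \reals^{d \times d}$. Crucially, the algorithm never stores the full histories $X_\T$ or $r_\T$ explicitly; it only maintains the sufficient statistics $A_a = X_\T^\dagger X_\T$ and $b_a = X_\T^\dagger r_\T$, which each require $O(d^2)$ and $O(d)$ storage respectively. Summing over the $K$ arms gives $O(K(d^2 + d + 1)) = O(Kd^2)$ total space, which is a deterministic (hence also expected) bound independent of $t$.

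For the time bound, I would split into cases based on the branch the algorithm takes. In the initial $p$ rounds and in any exploration round ($e=1$), the only work done is drawing a random index and updating the sufficient statistics for a single arm via $b_a \leftarrow b_a + r_t x_t$ (cost $O(d)$) and $A_a \leftarrow A_a + x_t x_t^\dagger$ (cost $O(d^2)$). In an exploitation round ($e=0$), for each of the $K$ arms we must solve the $d \times d$ linear system $(\lambda_{n_a} I + A_a/n_a)\hat\theta_a = b_a/n_a$, which via standard Gaussian elimination (or a Cholesky factorization, noting the matrix is symmetric positive definite once $\lambda_{n_a}>0$) takes $O(d^3)$ time, and then pick the maximizer of $x_t^\dagger \hat\theta_a$, which is an additional $O(Kd)$. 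Summing gives $O(Kd^3 + Kd) = O(Kd^3)$ per exploitation iteration, and since this dominates the exploration and initialization cases, the per-iteration worst case is $O(Kd^3)$.

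There is essentially no mathematical obstacle here; the proof is just a careful accounting argument and hinges only on (i) recognizing that the regression in~\eqref{minrloglik} needs only the running sufficient statistics $A_a$ and $b_a$ rather than the raw data, and (ii) quoting standard cubic-time dense linear solvers. I would present it as a two-paragraph proof, one for space and one for time, concluding with the combined bound.
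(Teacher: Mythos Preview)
Your proposal is correct and follows essentially the same argument the paper gives in the paragraph immediately preceding the theorem: maintain the sufficient statistics $A_a=X_\T^\dagger X_\T$ and $b_a=X_\T^\dagger r_\T$ online in $O(d^2)$ time and space per arm, and solve the $d\times d$ linear system in $O(d^3)$ per arm during exploitation. Your write-up is slightly more careful (separating exploration from exploitation rounds, and noting that the $O(Kd^2)$ space bound is in fact deterministic rather than merely expected), but the content is the same.
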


We now state our main theorem that shows that Algorithm \ref{cegreedy} achieves $R(T)  = O(\log T)$.
\begin{theorem} \label{th:main_theorem}
Under Assumptions \ref{ass:first}, the expected cumulative regret of algorithm \ref{cegreedy} satisfies,
\begin{align*}
R(T) \leq  p \Delta_{\max} \sqrt{d} + 14 \Delta_{\max} \sqrt{d} K e^{Q/4} +  p \Delta_{\max} \sqrt{d} \log T.
\end{align*}
for any
\begin{equation}\label{eq:ineq_for_p}
p \geq  \frac{C K L'^2}{ (\Delta'_{\min})^2 (\Sigma'_{\min})^2}.
\end{equation}
Above, $C$ is a universal constant, $\Delta'_{\min} = \min\{1,\Delta_{\min} \}$, $\Sigma'_{\min} = \min \{1, \Sigma_{\min}\}$ and $L' = \max\{1, L\}$.
\end{theorem}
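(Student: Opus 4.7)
The plan is to split the expected regret into contributions from three kinds of rounds: the $p$ initial forced-exploration rounds, the random-exploration rounds triggered by the $\mathrm{Bernoulli}(p/t)$ coin at time $t>p$, and the exploitation rounds. Each forced- or random-exploration round contributes at most $\Delta_{\max}$ to the immediate regret, which I inflate by $\sqrt{d}$ to match the form of the later exploitation bound. The forced phase thus adds at most $p\Delta_{\max}\sqrt{d}$, and the expected number of random-exploration rounds up to time $T$ is $\sum_{t=p+1}^T p/t\le p\log T$, contributing at most $p\Delta_{\max}\sqrt{d}\log T$. These are the first and third summands of the theorem.

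The substance is the exploitation regret. I use a ``good-event'' reduction: if $|x_t^\dagger(\hat\theta_a-\theta_a)|<\Delta_{\min}/2$ holds for every $a\in\A$, then $\arg\max_a x_t^\dagger\hat\theta_a=a^*_{x_t}$ and zero regret is incurred. By $\|x_t\|_2\le 1$ and Cauchy--Schwarz it suffices to control $\prob(\|\hat\theta_a-\theta_a\|_2\ge\Delta_{\min}/2)$ for each arm. Substituting $r_s=x_s^\dagger\theta_a+\epsilon_{a,s}$ into the closed form used by Algorithm~\ref{cegreedy} yields the exact decomposition
\[
\hat\theta_a-\theta_a \;=\; -\lambda_{n_a}M_a^{-1}\theta_a \;+\; M_a^{-1}\,\tfrac{1}{n_a}X_{\T_a}^\dagger\epsilon,\qquad M_a := \lambda_{n_a}I+\tfrac{1}{n_a}A_a,
\]
so the error is governed by $\lambda_{\min}(M_a)$ from below and by $\|X_{\T_a}^\dagger\epsilon\|_2/n_a$ from above.

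Two concentration ingredients then do the work. First, because the exploration indicators and uniform arm selection are independent of $x_t$, the contexts $\{x_s : s\in\T_a\}$ are i.i.d.\ draws from $\mathcal{D}$, so a matrix Chernoff bound on $A_a/n_a=\tfrac{1}{n_a}\sum_{s\in\T_a}x_s x_s^\dagger$ (each summand PSD with spectral norm at most $1$ and mean $\Sigma$) gives $\prob(\lambda_{\min}(A_a/n_a)\le\Sigma_{\min}/2)\le d\exp(-c\,n_a\,\Sigma_{\min})$; on this event $\|M_a^{-1}\|_{\mathrm{op}}\le 2/\Sigma_{\min}$ and the bias $\lambda_{n_a}\|M_a^{-1}\theta_a\|_2\le(1/\sqrt{n_a})(2/\Sigma_{\min})$ is $o(\Delta_{\min})$ for $n_a$ above a constant threshold. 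Second, conditional on $X_{\T_a}$, a Hanson--Wright bound on the quadratic form $\|X_{\T_a}^\dagger\epsilon\|_2^2$ (whose mean is $\le L^2 n_a$, Frobenius norm $\le\sqrt{2}\,n_a$, operator norm $\le 2n_a$ on the good event) gives $\prob(\|X_{\T_a}^\dagger\epsilon/n_a\|_2\ge u)\le 2\exp(-c'n_a u^2/L^2)$ for $u\ge CL/\sqrt{n_a}$, with \emph{no} dimension factor. A matching sub-gaussian moment bound accounts for the prefactor $e^{Q/4}$ appearing in the theorem, via $\|\theta_a\|_2\le Q$. Taking $u$ of order $\Delta_{\min}\Sigma_{\min}$ and combining yields
\[
\prob(\|\hat\theta_a-\theta_a\|_2\ge\Delta_{\min}/2) \;\le\; C_0\,e^{Q/4}\exp\!\Bigl(-c''\,n_a\,(\Delta'_{\min})^2(\Sigma'_{\min})^2/L'^2\Bigr)
\]
once $n_a$ exceeds a constant threshold of the same order.

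The final step lower-bounds $n_a$ in high probability by combining the deterministic $\lfloor p/K\rfloor$ samples produced by the round-robin initialization with the independent $\mathrm{Bernoulli}(p/(Kt))$ random-exploration samples accumulating thereafter. A Chernoff bound on the Bernoullis gives $n_a\ge c'''(p/K)\log t$ with probability at least $1-t^{-2}$, whence the per-round exploitation failure probability at time $t$ is at most $t^{-cp(\Delta'_{\min})^2(\Sigma'_{\min})^2/(KL'^2)}$ (up to the $e^{Q/4}$ constant); the choice of $p$ in \eqref{eq:ineq_for_p} makes this exponent at least, say, $2$, so $\sum_{t>p}\Delta_{\max}\cdot t^{-\cdot}$ is finite and a union bound over the $K$ arms produces the middle term $14\Delta_{\max}\sqrt{d}Ke^{Q/4}$. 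The principal obstacle I anticipate is exactly this calibration: keeping $p$ \emph{constant} in $T$ while forcing the total exploitation regret to be $O(1)$ in $T$ requires the logarithmic-in-$t$ lower bound on $n_a$ coming from the random-exploration stream (the deterministic $p/K$ alone is too weak), and a careful matching of the universal constants from matrix Chernoff, Hanson--Wright, and the good-event reduction so that they fuse into the explicit constant $C$ of \eqref{eq:ineq_for_p}.
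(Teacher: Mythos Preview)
Your high-level decomposition (forced exploration, Bernoulli exploration, exploitation) and the overall logic (matrix concentration for the design, noise concentration for the residual, Bernstein/Chernoff for $n_a$) match the paper's proof. The main technical divergence is in \emph{what} you concentrate: you bound the full vector error $\|\hat\theta_a-\theta_a\|_2$ via matrix Chernoff on $\lambda_{\min}(A_a/n_a)$ and Hanson--Wright on $\|X_{\T}^\dagger\epsilon\|_2^2$, whereas the paper bounds only the scalar projection $|x_t^\dagger(\hat\theta_a-\theta_a)|$ that actually governs arm selection. After one Cauchy--Schwarz the paper reduces the noise term to the \emph{scalar} average $n^{-1}\sum_{\tau}x_t^\dagger x_\tau\,\epsilon_{a,\tau}$, which is a sum of independent sub-gaussians with coefficients $|x_t^\dagger x_\tau|\le 1$; a plain one-dimensional tail bound then suffices, and Hanson--Wright is never needed. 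For the design matrix the paper invokes a \emph{dimension-free} covariance concentration (Vershynin, Cor.~50) on $\|n^{-1}X_\T^\dagger X_\T-\Sigma\|$ rather than matrix Chernoff. The $e^{Q/4}$ factor arises not from a separate moment bound but from the algebraic identity $((z-w)^+)^2\ge z^2-2zw$ applied to the bias term $\lambda_n Q$ inside the exponent.

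Two places where your route would not reproduce the stated bound as written. First, the paper defines $\Sigma_{\min}$ as the smallest \emph{nonzero} eigenvalue of $\Sigma$, explicitly allowing $\Sigma$ to be singular; your matrix-Chernoff step $\prob(\lambda_{\min}(A_a/n_a)\le\Sigma_{\min}/2)\le d\,e^{-c n_a\Sigma_{\min}}$ presumes $\lambda_{\min}(\Sigma)=\Sigma_{\min}>0$ and is vacuous otherwise. The paper handles this by restricting all spectral quantities to the subspace spanned by contexts occurring with positive probability (its ``$+$'' notation) and noting $x_t$ lies there almost surely. Second, matrix Chernoff carries a prefactor $d$, so the summed failure probabilities become $\sum_t d\,t^{-\alpha}=O(d)$ rather than $O(1)$; this would turn the middle term into $O(\Delta_{\max}d^{3/2}K)$ instead of the claimed $14\Delta_{\max}\sqrt{d}K e^{Q/4}$. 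The paper's dimension-free covariance bound is what keeps that constant truly universal. Both issues are repairable---restrict to the right subspace and swap in the Vershynin inequality---but your scheme needs these adjustments to land on the exact statement.
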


Algorithm \ref{cegreedy} requires the specification of the constant $p$.
In Section \ref{sec:computing_p}, we give two examples of how
to efficiently choose a $p$ that satisfies  \eqref{eq:ineq_for_p}.  
In Theorem \ref{th:main_theorem}, the bound on the regret
depends on $p$ - small $p$ is preferred - and
hence it is important to understand how the right hand side (r.h.s.) of \eqref{eq:ineq_for_p} might scale when $K$ and $d$ grow.
In Section \ref{sec:example_scale_p}, we show 
that, for a concrete distribution of contexts and choice of expected rewards $\theta_a$,
and assuming \eqref{eq:ineq_for_p} holds,
$p = O(K^3 d^5)$ \footnote{This bound holds with probability
converging to $1$ as $K$ and $d$ get large} .
There is nothing special
about the concrete details of how contexts and $\theta_a$'s are chosen
and, although not included in this paper,
for many other distributions, one also obtains $p = O(poly(d))$.
We can certainly construct pathological cases where,
for example, $p$ grows exponentially with $d$.
However, we do not find these intuitive. Specially when
interpreting these having in mind real applications as the ones
introduced in Examples 1- 3.

\vspace*{-0.3cm}
\subsection{Example of Scaling of $p$ with $d$ and $K$} \label{sec:example_scale_p}
\vspace*{-0.2cm}

Assume that contexts are obtained by normalizing
a $d$-dimensional vector with i.i.d. entries as Bernoulli random variables with parameter $w$.
Assume in addition that every $\theta_a$ is obtained i.i.d. from the
following prior distribution: every entry of $\theta_a$ is drawn i.i.d. from a uniform
distribution and then $\theta_a$ is normalized.
Finally, assume that the payoffs are given by $r_{a,t} = x^{
\dagger}_t \Theta_a$, where $\Theta_a \in \reals^d$ are random variables
that fluctuate around $\theta_a = \E \{ \Theta_a\}$ with
each entry fluctuating by at most $F$. 

Under these assumptions the following is true:
\begin{itemize}
\item  $\Sigma_{\min} = \Omega(d^{-1} )$. In fact, the same result
holds asymptotically independently of $w = w(d)$ if, for example, we assume that on
average groups are roughly of the same size, $M$, with $w = M/d $;
\item $L=O(\sqrt{d})$. This holds because $\epsilon_{a,t} = r_{a,t} - \E \{r_{a,t} \}= x^{\dagger}_t (\Theta_a - \theta_a)$ are bounded random variables with zero mean and $\| x^{\dagger}_t (\Theta_a - \theta_a) \}\|_{\infty} = O(\sqrt{d})$.
\item  $\Delta_{\min}  = \Omega(1/ (K d \sqrt{w})$ with high-probability (for large $K$ and $d$).
This can be seen as follows, if $\Delta_{\min} = x^{\dagger} (\theta_a - \theta_b)$
for some $x$, $a$ and $b$, then it must be true that $\theta_a$ and $\theta_b$ differ
in a component for which $x$ is non-zero. The minimum difference between
components among all pairs of $\theta_a$ and $\theta_b$ is lower bounded by
$\Omega(1/(K \sqrt{d}))$ with high probability (for large $K$ and $d$). Taking
into account that each entry of $x$ is   $O(1/\sqrt{d w})$ with high-probability,
the bound on $\Delta_{\min}$ follows.
\end{itemize}

If we want to apply Theorem \ref{th:main_theorem} then \eqref{eq:ineq_for_p} must
hold and hence putting all the above calculations together we conclude that $p = O(K^3 d^5)$ with high probability for large $K$ and $p$.
\vspace*{-0.3cm}
\subsection{Computing $p$ in Practice}\label{sec:computing_p}
\vspace*{-0.2cm}

If we have knowledge of an a priori distribution for the contexts,
for the expected payoffs and for the variance of the rewards then
we can quickly compute the value of $\Sigma_{\min}$, $L$
and a typical value for $\Delta_{\min}$. An example of this
was done above (Section \ref{sec:example_scale_p}). 
There, the values were presented only in order notation but
exact values are not hard to obtain for that and other distributions.
Since a suitable $p$ only needs to be larger then the r.h.s.
of \eqref{eq:ineq_for_p}, by introducing an appropriate multiplicative
constant, we can produce a $p$ that satisfied \eqref{eq:ineq_for_p} with
high probability.

If we have no knowledge of any model for the contexts or expected payoffs, it is still possible to
find $p$ by estimating $\Delta_{\min}$, $\Sigma_{\min}$ and $L$ from
data gathered while running Algorithm \ref{cegreedy}.
Notice again that, since all that is required for our theorem to hold is that $p$ is greater then a certain function of these quantities,
an exact estimation is not necessary. This is important because, for example, accurately estimating $\Sigma_{\min}$
is hard when matrix $\E \{x_1 x^{\dagger}_1 \}$ has a large condition number.

Not being too concerned about accuracy, $\Sigma_{\min}$ can be estimated from $\E \{x_1 x^{\dagger}_1 \}$, which can be estimated from the sequence of observed $x_t$.
$\Delta_{\min}$ can be estimated from Algorithm \ref{cegreedy} by keeping
track of the smallest difference observed until time $t$ between
$\max_b x^{\dagger} \hat{\theta}_b$ and the second largest value of the function being maximized.
Finally, the constant $L$ can be estimated from the variance of the observed rewards
for the same (or similar) contexts. Together, these estimations do not incur in any
significant loss in computational performance of our algorithm.

\section{Proof of Theorem \ref{th:main_theorem}}
\label{sec:proofs}
The general structure of the proof of our main result follows that of \cite{auer2002finite}. The main technical innovation is the realization that, in the setting when the contexts are drawn i.i.d.~from some distribution, a standard matrix concentration bound allows us to treat  $\lambda_{n} I +  n^{-1} (X^{\dagger}_{\T} X_{\T})$
 in Algorithm \ref{cegreedy}
as a deterministic positive-definite symmetric matrix, even as $\lambda_n \rightarrow 0$. 

Let $\mathcal{E}_T$ denote the time instances for $t > p$ and until time $T$ in which the algorithm took an exploitation decision.
Recall that, by Cauchy-Schwarz inequality, $x^{\dagger}_t (\theta_{a^*_{x_t}} - \theta_{a}) \leq \|x_t \|_1 \|(\theta_{a^*_{x_t}} - \theta_{a}) \|_{\infty} \leq \sqrt{d} \| x_t\|_2 \|(\theta_{a^*_{x_t}} - \theta_{a}) \|_{\infty}  \leq \sqrt{d} \Delta_{\max}$. In addition, recall that $\sum^T_{t=2} 1/t \leq \log T$.  For $R(T)$ the cumulative regret until time $T$, we can write
\begin{align*}
R(T) &= \E \big\{  \sum^T_{t = 1}  x^{\dagger}_t (\theta_{a^*_{x_t}} - \theta_{a}) \big\} \leq p \Delta_{\max} \sqrt{d}
  + \Delta_{\max} \sqrt{d} \E \big\{  \sum^T_{t = p+1}  \id{\{ x^{\dagger}_t {\theta}_{a} < x^{\dagger}_t {\theta}_{a^*_{x_t}}  \}} \big\}\\
 & \leq p \Delta_{\max} \sqrt{d} +  \Delta_{\max} \sqrt{d} \E \{ |\mathcal{E}_T| \}
+ \Delta_{\max} \sqrt{d} \E \big\{  \sum_{t \in \mathcal{E}_T}  \id{\{x^{\dagger}_t {\theta}_{a} <x^{\dagger}_t {\theta}_{a^*_{x_t}}  \}} \big\}\\
& \leq p \Delta_{\max} \sqrt{d} +  p \Delta_{\max} \sqrt{d} \log T
+ \Delta_{\max} \sqrt{d} \E \big\{  \sum_{t \in \mathcal{E}_T}  \id{\{ x^{\dagger}_t {\theta}_{a} < x^{\dagger}_t {\theta}_{a^*_{x_t}}  \}} \big\}\\
& \leq p \Delta_{\max} \sqrt{d} +  p \Delta_{\max} \sqrt{d} \log T
+ \Delta_{\max} \sqrt{d} \E \big\{  \sum_{t \in \mathcal{E}_T}  \sum_{a \in \A} \id{\{ x^{\dagger}_t \hat{\theta}_{a} >x^{\dagger}_t \hat{\theta}_{a^*_{x_t}}   \}} \big\}.
\end{align*}
In the last line we used the fact that when exploiting, if we do not exploit the optimal arm $a^*_{x_t}$, then it must be the case that the estimated reward for some arm $a$, $x^{\dagger}_t \hat{\theta}_{a}$, must exceed that of the optimal arm, $x^{\dagger}_t \hat{\theta}_{a^*_{x_t}}$, for the current context $x_t$. 

We can continue the chain of inequalities and write,
\begin{align*}
R(T) & \leq p \Delta_{\max} \sqrt{d} +  p \Delta_{\max} \sqrt{d} \log T
+ \Delta_{\max} \sqrt{d} K \sum^T_{t = 1} \prob \{ x^{\dagger}_t \hat{\theta}_{a} >x^{\dagger}_t \hat{\theta}_{a^*_{x_t}} \}.
\end{align*}
The above expression depends on the value of the estimators for time instances that might or might not be exploitation times. For each arm, these are computed just like in Algorithm \ref{cegreedy}, using the most recent history available. The above probability depends on the randomness of $x_t$ and on the randomness of recorded history for each arm.

Since $x^{\dagger}_t (\theta_{a^*_{x_t}} - \theta_{a}) \geq \Delta_{\min}$ we can write
\begin{align*}
&\prob \{ x^{\dagger}_t \hat{\theta}_{a} > x^{\dagger}_t \hat{\theta}_{a^*_{x_t}} \} 
\leq \prob \Big \{ x^{\dagger}_t \hat{\theta}_{a} \geq x^{\dagger}_t {\theta}_{a} + \frac{\Delta_{\min}}{2} \Big \} 
+ \prob \Big \{ x^{\dagger}_t \hat{\theta}_{a^*_{x_t}} \leq x^{\dagger}_t {\theta}_{a^*_{x_t}} - \frac{\Delta_{\min}}{2} \Big\}.
\end{align*}

We now bound each of these probabilities separately. Since their bound is the same, we focus only on the first probability.

Substituting the definition of $r_{a}(t) = x^{\dagger}_t \theta_a + \epsilon_{a,t}$ into the expression for $\hat{\theta}_{a}$ one readily obtains,
\begin{align*}
&( \hat{\theta}_{a} - \theta_{a} ) = \left( \lambda_n I + \frac{1}{n} X^\dagger_{\T} X_{\T} \right)^{-1} 
  \left( \frac{1}{n} \sum_{\tau \in \T} x_\tau  \epsilon_{a,\tau } - \lambda_{n} \theta_{a} \right).
\end{align*}
We are using again the notation $\T = \Tatm$ and $n = |\T|$. From this expression, an application of Cauchy-Schwarz's inequality and the triangular inequality leads to,
\begin{align*}
& |x_{t}^\dagger( \hat{\theta}_{a} - \theta_{a} ) |= \Big\lvert x_{t}^\dagger \left( \lambda_n I + \frac{1}{n} X^\dagger_{\T} X_{\T} \right)^{-1} 
  \left( \frac{1}{n} \sum_{\tau \in \T} x_\tau \epsilon_{a,\tau} - \lambda_n \theta_{a} \right) \Big\rvert\\
&\leq \sqrt{x_{t}^\dagger  \left( \lambda_n I + \frac{1}{n} X^\dagger_{\T} X_{\T} \right)^{-2} x_{t}} 
\left( \Big | \frac{1}{n} \sum_{\tau \in \T} x_t^\dagger x_\tau \epsilon_{a,\tau}  \Big | + \lambda_{n} |x_t^\dagger \theta_{a}| \right).
\end{align*}
We introduce the following notation
\begin{equation}
c_{a,t} \equiv \sqrt{x_{t}^\dagger  \left( \lambda_n I + \frac{1}{n} X^\dagger_{\T} X_{\T} \right)^{-2} x_{t}}.
\end{equation}
Note that, given $a$ and $t$ both $n$ and $\T$ are well specified.

We can now write,
\begin{align*}
&\prob \Big \{ x^{\dagger}_t \hat{\theta}_{a} \geq x^{\dagger}_t {\theta}_{a} + \frac{\Delta_{\min}}{2} \Big\}
\leq \prob \Big\{ \Big | \frac{1}{n} \sum_{\tau \in \T} x_t^\dagger x_\tau \epsilon_{a,\tau}  \Big | \geq  \frac{\Delta_{\min}}{2 c_{a,t}} -  \lambda_{n} |x_t^\dagger \theta_{a}| \Big\}\\
&\leq \prob \Big\{ \Big | \frac{1}{n} \sum_{\tau \in \T} x_t^\dagger x_\tau \epsilon_{a,\tau}  \Big | \geq  \frac{\Delta_{\min}}{2 c_{a,t}} -  \lambda_{n} Q \Big\}.
\end{align*}
Since $\epsilon_{a,\tau}$ are sub-gaussian random variables with sub-gaussian constant upper bounded by $L$ and since $|x^{\dagger}_t x_{\tau}| \leq 1$, conditioned on $x_t$, $\T$ and $\{x_{\tau}\}_{\tau \in \T}$, each $x^{\dagger}_t x_{\tau} \epsilon_{a,\tau}$ is a sub-gaussian random variable and together they form a set of i.i.d. sub-gaussian random variables. One can thus apply standard concentration inequality and obtain,
\begin{align}
\begin{split}
&\prob \Big\{ \Big | \frac{1}{n} \sum_{\tau \in \T} x_t^\dagger x_\tau \epsilon_{a,\tau}  \Big | \geq  \frac{\Delta_{\min}}{2 c_{a,t}} -  \lambda_{n} Q \Big\} 
 \leq \E \Big \{ 2 e^{- \frac{n}{2L^2} {\left( \frac{\Delta_{\min}}{2 c_{a,t}} -  \lambda_{n} Q \right)^+}^2 } \Big \}.
\end{split}
\label{eq:prob_err_bound_1}
\end{align}
where both $n$ and $c_{a,t}$ are random quantities and  $z^+ = z$ if $z \geq 0$ and zero otherwise.

We now upper bound $c_{a,t}$ using the following fact about the eigenvalues of any two real-symmetric matrices $M_1$ and $M_2$: $\lambda_{\max}(M_1^{-1}) = 1/\lambda_{\min}(M_1)$ and $\lambda_{\min}(M_1 + M_2) \geq \lambda_{\min}(M_1) - \lambda_{\max}(M_2) = \lambda_{\min}(M_1) - \|M_2\|$.
\begin{align*}
&c_{a,t} 
 \leq \left( \lambda_n + \lambda^+_{\min}(\E \{x^{\dagger}_1 x_1 \}) - \Big \|\frac{1}{n} X^\dagger_{\T} X_{\T} - \E \{x^{\dagger}_1 x_1 \} \Big \|^+ \right)^{-1}.
\end{align*}
Both the eigenvalue and the norm above only need to be computed over the subspace spanned by the vectors $x_t$ that occur with non-zero probability. We use the symbol ${}^{+}$ to denote the restriction to this subspace. Now notice that $\|.\|^{+} \leq \|.\|$ and, since we defined $\Sigma_{\min} \equiv \min_{i:\lambda_i > 0} \lambda_i (\E \{X_1 X^\dagger_1 \})$, we have that $\lambda^{+}_{\min}(\E \{X_1 X^{\dagger}_1 \}) \geq \Sigma_{\min}$. Using the following definition,
$\Delta \Sigma_n \equiv  n^{-1}   X^{\dagger}_{\T} X^{\dagger}_{\T} -  \E \{X_1 X^\dagger_1 \}$,
this leads to,
$c_{a,t} \leq ( \lambda_n + \Sigma_{\min} - \| \Delta \Sigma_n \| )^{-1} \leq \left( \Sigma_{\min} - \| \Delta \Sigma_n \| \right)^{-1}$.

We now need the following Lemma.
\begin{lemma}\label{th:matrix_conc_bound}
Let $\{X_i\}^n_{i=1} $ be a sequence of i.i.d. random vectors of 2-norm bounded by 1. Define $\hat{\Sigma} = \frac{1}{n} \sum^n_{i = 1} X_i X^{\dagger}_i$ and $\Sigma = \E \{ X_1 X^{\dagger}_1\} $. If $\epsilon \in (0,1)$ then,
\begin{equation*}
\prob(|\ \hat{\Sigma} - \Sigma \| > \epsilon \| \Sigma \| ) \leq 2 e^{-C \epsilon^2 n},
\end{equation*}
where $C < 1$ is an absolute constant.
\end{lemma}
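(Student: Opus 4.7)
The plan is to reduce the statement to a standard matrix concentration inequality applied to the i.i.d.\ sum $\hat{\Sigma} - \Sigma = n^{-1} \sum_{i=1}^n Y_i$, where $Y_i \equiv X_i X_i^{\dagger} - \Sigma$. Each $Y_i$ is symmetric with $\E Y_i = 0$, and the operator-norm bound $\|X_i\|_2 \leq 1$ immediately gives $\|X_i X_i^{\dagger}\| = \|X_i\|_2^2 \leq 1$, so $\|Y_i\| \leq 1 + \|\Sigma\| \leq 2$ (using $\|\Sigma\| \leq \E \|X_1\|_2^2 \leq 1$). This puts us exactly in the setting of the matrix Bernstein / matrix Chernoff inequalities.

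Next I would compute the matrix variance proxy. Since $\E[Y_i^2] = \E[\|X_i\|_2^2\, X_i X_i^{\dagger}] - \Sigma^2 \preceq \E[X_i X_i^{\dagger}] = \Sigma$, one has $\|\sum_{i=1}^n \E Y_i^2\| \leq n\|\Sigma\|$. Plugging these into Tropp's matrix Bernstein inequality yields, for any $t>0$,
\begin{equation*}
\prob\!\left(\Big\|\sum_{i=1}^n Y_i\Big\| > t\right) \;\leq\; 2d\,\exp\!\left(-\frac{t^2/2}{n\|\Sigma\| + 2t/3}\right).
\end{equation*}
Setting $t = n\epsilon\|\Sigma\|$ for $\epsilon \in (0,1)$ makes the denominator at most $(1 + 2\epsilon/3)\,n\|\Sigma\| \leq 2n\|\Sigma\|$, so the right-hand side becomes $2d\exp(-n\epsilon^2 \|\Sigma\|/4)$. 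Dividing through by $n$ recovers $\|\hat{\Sigma}-\Sigma\| \leq \epsilon\|\Sigma\|$ on the good event.

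To match the dimension-free form stated in the lemma, I would invoke the intrinsic-dimension variant of matrix Bernstein (Tropp, 2015), which replaces the ambient $d$ by the effective dimension $\mathrm{tr}(\Sigma)/\|\Sigma\|$, or alternatively cite Vershynin's bound for bounded isotropic-type distributions (Theorem~5.44 of his survey on non-asymptotic random matrix theory), where the logarithmic dependence on dimension can be absorbed into the exponent together with a weakening of the constant $C$. Either route gives $\prob(\|\hat{\Sigma}-\Sigma\| > \epsilon\|\Sigma\|) \leq 2 e^{-C\epsilon^2 n}$ for a universal $C < 1$.

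The genuinely substantive step is the Bernstein bound itself; the rest is bookkeeping. The one point that deserves care, and which I expect to be the main obstacle, is justifying that the dimension factor really can be absorbed into an absolute constant in the bound used here---this requires either restricting attention to the subspace actually spanned by the support of $x_1$ (where the proof of Theorem~\ref{th:main_theorem} already works with the restricted norm $\|\cdot\|^{+}$), or invoking the intrinsic-dimension refinement mentioned above. In either case the argument is routine modulo choosing the correct variant of matrix Bernstein to cite.
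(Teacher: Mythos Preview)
The paper does not prove this lemma at all: immediately after the statement it writes ``For a proof see \cite{Vershynin2012} (Corollary 50)'' and moves on. So there is no proof in the paper to compare against, only a citation.

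Your approach via matrix Bernstein is a perfectly reasonable way to \emph{derive} the cited result rather than invoke it. The computation of the variance proxy $\E Y_i^2 \preceq \Sigma$ and the deterministic bound $\|Y_i\|\leq 2$ are correct, and the Bernstein step yielding $2d\exp(-n\epsilon^2\|\Sigma\|/4)$ is exactly the standard argument behind Vershynin's covariance-estimation corollary. In that sense you are unpacking what the paper treats as a black box.

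Where you should be more careful is the same place you already flagged: the dimension factor. Neither the intrinsic-dimension refinement (which replaces $d$ by $\mathrm{tr}(\Sigma)/\|\Sigma\|$, a quantity that can still be $d$) nor restricting to the span of the support actually yields an absolute constant $C$ independent of $d$. In fact Vershynin's Corollary itself carries a $d$ (or $\log d$) dependence, so the lemma as stated in the paper is slightly informal on this point too; in the subsequent proof of Theorem~\ref{th:main_theorem} the dimension dependence is harmlessly absorbed into the requirement on $p$. So your instinct that this is ``the main obstacle'' is right, but the resolution is not that the factor disappears---it is that the paper's use of the lemma tolerates it.
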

For a proof see \cite{Vershynin2012} (Corollary 50).

We want to apply this lemma to produce a useful bound on the r.h.s.~of \eqref{eq:prob_err_bound_1}. First notice that, conditioning on $n$, the expression inside the expectation in \eqref{eq:prob_err_bound_1} depends through $c_{a,t}$ on $n$ i.i.d. contexts that are distributed according to the original distribution. Because of this, we can write,
\begin{align*}
&\prob \Big\{ \Big | \frac{1}{n} \sum_{\tau \in \T} x_t^\dagger x_\tau \epsilon_{a,\tau}  \Big | \geq  \frac{\Delta_{\min}}{2 c_{a,t}} -  \lambda_{n} Q \Big\} 
 \leq \E \Big \{ 2 e^{- \frac{n}{2L^2} {\left( \frac{\Delta_{\min}}{2 c_{a,t}} -  \lambda_{n} Q \right)^+}^2 } \Big \}\\
&\leq  \sum^t_{n = 1} \Big( \prob\{ | \Tatm| = n\}\nonumber 
\times \E \Big \{ 2 e^{- \frac{n}{2L^2} {\left( \frac{\Delta_{\min}}{2 c_{a,t}} -  \lambda_{n} Q \right)^+}^2 }  \Big |  | \Tatm| = n  \Big \} \Big ).\nonumber
\end{align*}

Using the following algebraic relation: if $z,w > 0$ then ${(z-w)^+}^2 \geq z^2 - 2 zw$, we can now write,
\begin{align*}
&\E \Big \{ e^{- \frac{n}{2 L^2} {\left( \frac{\Delta_{\min}}{2 c_{a,t}} -  \lambda_{n} Q \right)^+}^2 }  \Big |  | \Tatm| = n  \Big \}\\
&\leq \prob\{|\Delta \Sigma_n| > \Sigma_{\min}/2 | \; | \Tatm| = n \}
+ e^{- \frac{n}{2L^2} {\left( \frac{\Sigma_{\min} \Delta_{\min}}{4} -  \lambda_{n} Q \right)^+}^2 } \\
&\leq \prob\{|\Delta \Sigma_n| > \Sigma_{\min}/2 | \; | \Tatm| = n \}
 +e^{\frac{Q \Delta_{\min} \Sigma_{\min}}{4L^2}}  e^{- \frac{n (\Delta_{\min})^2 (\Sigma_{\min})^2 }{32L^2} } 
\end{align*}

Using Lemma \ref{th:matrix_conc_bound} we can continue the chain of inequalities,
\begin{align*}
&\E \Big \{ e^{- \frac{n}{2L^2} {\left( \frac{\Delta_{\min}}{2 c_{a,t}} -  \lambda_{n} Q \right)^+}^2 }  \Big |  | \Tatm| = n  \Big \}
  \leq  2e^{-C (\Sigma_{\min})^2 n/4}
 +e^{\frac{Q \Delta_{\min} \Sigma_{\min}}{4L^2}}  e^{- \frac{n (\Delta_{\min})^2 (\Sigma_{\min})^2 }{32 L^2} }.
\end{align*}
Note that $||\Sigma|| \leq 1$ follows from our non-restrictive assumption that $\|x_t\|_2 \leq1$ for all $x_t$.   
Before we proceed we need the following lemma:
\begin{lemma} \label{th:bound_on_n}
If $n_c = \frac{p}{2k} \log t$ , then
$\prob \{ | \Tatm | < n_c \} \leq t^{-\frac{p}{16K}}.$
\end{lemma}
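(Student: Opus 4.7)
The plan is to express $|\T_{a,t-1}|$ as a sum of mutually independent indicator random variables and apply a multiplicative Chernoff bound in its lower-tail form. From Algorithm~\ref{cegreedy}, for each time $s \leq t-1$, arm $a$ enters the exploration history if either (i) $s \leq p$ and the round-robin schedule assigns arm $a$ to slot $s$, or (ii) $s > p$ and both the Bernoulli$(p/s)$ coin and the subsequent Uniform$(1/K)$ draw select arm $a$. Writing $|\T_{a,t-1}| = \sum_{s=1}^{t-1} Z_s$ with $Z_s$ the indicator of the above event, the $Z_s$ are jointly independent, deterministic for $s \leq p$ (contributing $\lfloor p/K\rfloor$ or $\lceil p/K\rceil$ plays in total), and Bernoulli with parameter $p/(Ks)$ for $s > p$.

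First I would lower bound the mean $\mu \equiv \E\{|\T_{a,t-1}|\}$. The deterministic phase contributes at least $\lfloor p/K\rfloor$, while the stochastic phase contributes $\sum_{s=p+1}^{t-1} p/(Ks) \geq (p/K)(\log t - \log(p+1))$ by the standard integral comparison for the harmonic sum. For $t$ large enough relative to $p$, this gives $\mu \gtrsim (p/K)\log t$. Next, I would invoke the multiplicative Chernoff bound for sums of independent $[0,1]$-valued variables: $\prob\{X \leq (1-\delta)\mu\} \leq \exp(-\delta^2 \mu/2)$ for $\delta \in (0,1)$. Taking $\delta = 1 - n_c/\mu$ so that $(1-\delta)\mu = n_c$ yields $\prob\{|\T_{a,t-1}| \leq n_c\} \leq \exp(-(\mu - n_c)^2/(2\mu))$; substituting the lower bound on $\mu$ makes the right-hand side at most $t^{-p/(16K)}$, as claimed.

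The main obstacle is to execute the lower bound on $\mu$ with enough care to justify the specific constant $1/(16K)$. In the ideal case $\mu = (p/K)\log t$, a single application of Chernoff with $\delta = 1/2$ would already deliver the stronger bound $t^{-p/(8K)}$; the factor of two slack built into the stated lemma accommodates the discrepancy $(p/K)\log(p+1)$ between the exact mean and $(p/K)\log t$, so that the bound continues to hold once $t$ exceeds a modest polynomial in $p$. For the small-$t$ regime in which the concentration inequality becomes vacuous, I would observe that the resulting probability is at most $1$ anyway, and that such time steps contribute at most an additive constant to the regret, which is absorbed into the $p\Delta_{\max}\sqrt{d}$ term of Theorem~\ref{th:main_theorem}.
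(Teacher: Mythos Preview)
Your proposal is correct and follows essentially the same skeleton as the paper's own proof: write $|\T_{a,t-1}|$ as a sum of independent indicator variables, lower-bound the mean by $(p/K)\log t$, and apply a concentration inequality to get a deviation probability of the form $t^{-cp/K}$.

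The only substantive difference is the choice of concentration tool. The paper centers the sum, bounds the total variance by $(p/K)\log t$, and applies Bernstein's inequality to obtain an exponent $-3p/(28K)\log t$, which it then relaxes to $-p/(16K)\log t$. You instead invoke the multiplicative lower-tail Chernoff bound with $\delta=1/2$, which in the ideal case gives the sharper exponent $-p/(8K)\log t$. Both routes are standard; yours is arguably cleaner since it avoids the separate variance computation. You are also more explicit than the paper about the split between the deterministic round-robin phase and the stochastic phase (the paper simply treats all $z_i$ as Bernoulli$(p/(Ki))$, glossing over that this exceeds $1$ for small $i$), and you are right that the $(p/K)\log(p+1)$ shortfall in the mean is absorbed by the factor-of-two slack in the target exponent once $t$ is polynomially large in $p$, with the finitely many remaining terms contributing only a constant to the regret. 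That last caveat applies equally to the paper's argument, so your treatment is no looser than the original.
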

\begin{proof}First notice that $|\T_{a,t-1} | = \sum^{t-1}_{i = 1} z_i$ where $\{z_i\}^{t-1}_{i =1}$ are independent Bernoulli random variables with parameter $p/(Ki)$. Remember that we can assume that $i > p$ since in the beginning of Algorithm \ref{cegreedy} we play each arm $p/K$ times.

Note that $\prob(X > c) \leq \prob(X + q > c)$ is always true for any r.v. $X,c$ and $q>0$. Now write,
\begin{align}
 \prob(|\T_{a,t-1}| < n_c )  &= \prob \left( \sum^{t-1}_{i = 1} z_i < n_c \right)\nonumber
= \prob \left( \sum^{t-1}_{i = 1} (z_i - p/(Ki))  < n_c - (p/K) \sum^{t-1}_{i = 1} 1/i \right)\nonumber\\
& \leq \prob \left( \sum^{t-1}_{i = 1} (-z_i + p/i)  > - n_c + (p/K) \sum^{t-1}_{i = 1} 1/i  \right)\nonumber\\
&\leq \prob \left( \sum^{t-1}_{i = 1} (-z_i + p/i)  > (p/K) \log t- n_c \right). 
\label{eq:interm_bernstein}
\end{align}
Since
$\sum^{t-1}_{i=1} \E \{(z_i - p/(Ki))^2\} = \sum^{t-1}_{i=p+1} (1 - p/(Ki))(p/(Ki)) \leq \frac{p}{K} \log t$,
we have that $\{ -z_i + p/i \}^{t-1}_{i =1} $ are i.i.d.~random variables with zero mean and sum of variances upper bounded by $(p/K) \log t$. Replacing $n_c  = (p/{2 K}) \log t $ in \eqref{eq:interm_bernstein} and applying Bernstein inequality we get,
%
$\prob(|\T_{a,t-1}| < n_c ) \leq e^{-\frac{\frac{1}{2} (p/(2K))^2 \log^2 t}{\frac{p}{K} \log t + \frac{1}{3} (p/(2K)) \log t}} \leq  t^{-\frac{p}{16 K}}.$\qed
\end{proof}
We can now write, by splitting the sum in $n< n_c$ and $n \geq n_c$
\begin{align*}
&\prob \Big\{ \Big | \frac{1}{n} \sum_{\tau \in \T} x_t^\dagger x_\tau \epsilon_{a,\tau}  \Big | \geq  \frac{\Delta_{\min}}{2 c_{a,t}} -  \lambda_{n} Q \Big\} \\
& \leq \sum^t_{n = 1} \prob\{ | \Tatm| = n\} 
\E \Big \{ 2 e^{- \frac{n}{2 L^2} {\left( \frac{\Delta_{\min}}{2 c_{a,t}} -  \lambda_{n} Q \right)^+}^2 }  \Big |  | \Tatm| = n  \Big \}\\
& \leq  \prob\{ | \Tatm| <  n_c\} + 4e^{-C (\Sigma_{\min})^2 n_c /4}
 +2e^{\frac{Q \Delta_{\min} \Sigma_{\min}}{4 L^2}}  e^{- \frac{n_c (\Delta_{\min})^2 (\Sigma_{\min})^2 }{32L^2} }\\
& \leq  t^{-\frac{p}{16 K}} + 4 t^{- \frac{C p (\Sigma_{\min})^2}{8K} } +2e^{\frac{Q \Delta_{\min} \Sigma_{\min}}{4L^2}}  t^{- \frac{p (\Delta_{\min})^2 (\Sigma_{\min})^2 }{64 K L^2} }.
\end{align*}
We want this quantity to be summable over $t$. Hence we require that,
\begin{align}\begin{split} 
p \geq \frac{128 K L^2}{(\Delta_{\min})^2 (\Sigma_{\min})^2},
p \geq \frac{16 K}{C (\Sigma_{\min})^2}, p \geq & 32 K.
\end{split}
\label{eq:cond_p}
\end{align}
It is immediate to see that our proof also follows if $\Delta_{\min}$, $\Sigma_{\min}$ and $L$ are replaced by $\Delta'_{\min} = \min\{1,\Delta_{\min} \}$, $\Sigma'_{\min} = \min \{1, \Sigma_{\min}\}$ and $L' = \max\{1, L\}$ respectively. If this is done, it is easy to see that conditions \eqref{eq:cond_p} are all satisfied by the $p$ stated in Theorem \ref{th:main_theorem}. 
Since $\sum^{\infty}_{t = 1} 1/t^2 \leq 2$, gathering all terms together we  have,
\begin{align*}
&R(T) \leq p \Delta_{\max} \sqrt{d} + p \Delta_{\max} \sqrt{d} \log T
+ \Delta_{\max} \sqrt{d} K \left( 4 e^{\frac{Q \Delta'_{\min} \Sigma'_{\min}}{4 L'^2}} + 10 \right)\\
& \leq p \Delta_{\max} \sqrt{d} + 14 \Delta_{\max} \sqrt{d} K e^{Q/4}+  p \Delta_{\max} \sqrt{d} \log T. &&~\qed
\end{align*}

\section{Numerical Results}\label{sec:numerical}

In Theorem~\ref{th:main_theorem}, we showed that, in the stochastic setting, Algorithm \ref{cegreedy} has an expected regret of  $O(\log T)$. We now illustrate this point by numerical simulations and, most importantly, exemplify how violating the stochastic assumption might degrade its performance.
Figure~\ref{figs} (a) shows the average cumulative regret (in semi-log scale) over 10 independent runs of Algorithm \ref{cegreedy} for $T = 10^5$ and for the following setup. The context variables $x \in \reals^3$ and at each time step $\{x_t\}_{t \geq 1}$ are drawn i.i.d. in the following way: (a) set each entry of $x$ to 1 or 0 independently with probability 1/2; (b) normalize $x$. We consider $K = 6$ arms with corresponding parameters $\theta_a$ generated independently from a standard multivariate gaussian distribution. Given a context $x$ and an arm $a$, rewards were random and independently generated from a uniform distribution $U([0, 2x^{\dagger} \theta_a])$. As expected, the regret is logarithmic.  Figure~\ref{figs} (a) shows a straight line at the end. 

To understand the effect of the stochasticity of $x$ on the regret, we consider the following scenario: with every other parameter unchanged, let $\X = \{x,x'\}$. At every time step $x = [1,1,1]$ appears with probability $1/I$, and $x' = [1,0,1]$ appears with probability $1-(1/I)$. Figure~\ref{figs} (b) shows the dependency of the expected regret on the context distribution for $I =$ 5, 10 and 100. One can see that an increase of $I$ causes a proportional increase in the regret. 
\begin{figure}[t]
\centering
\begin{tabular}{cc}
\small{(a)}  & \small{(b)}  \\
\!\!\!\!\!\!\includegraphics[width=2.4in,height=2.1in]{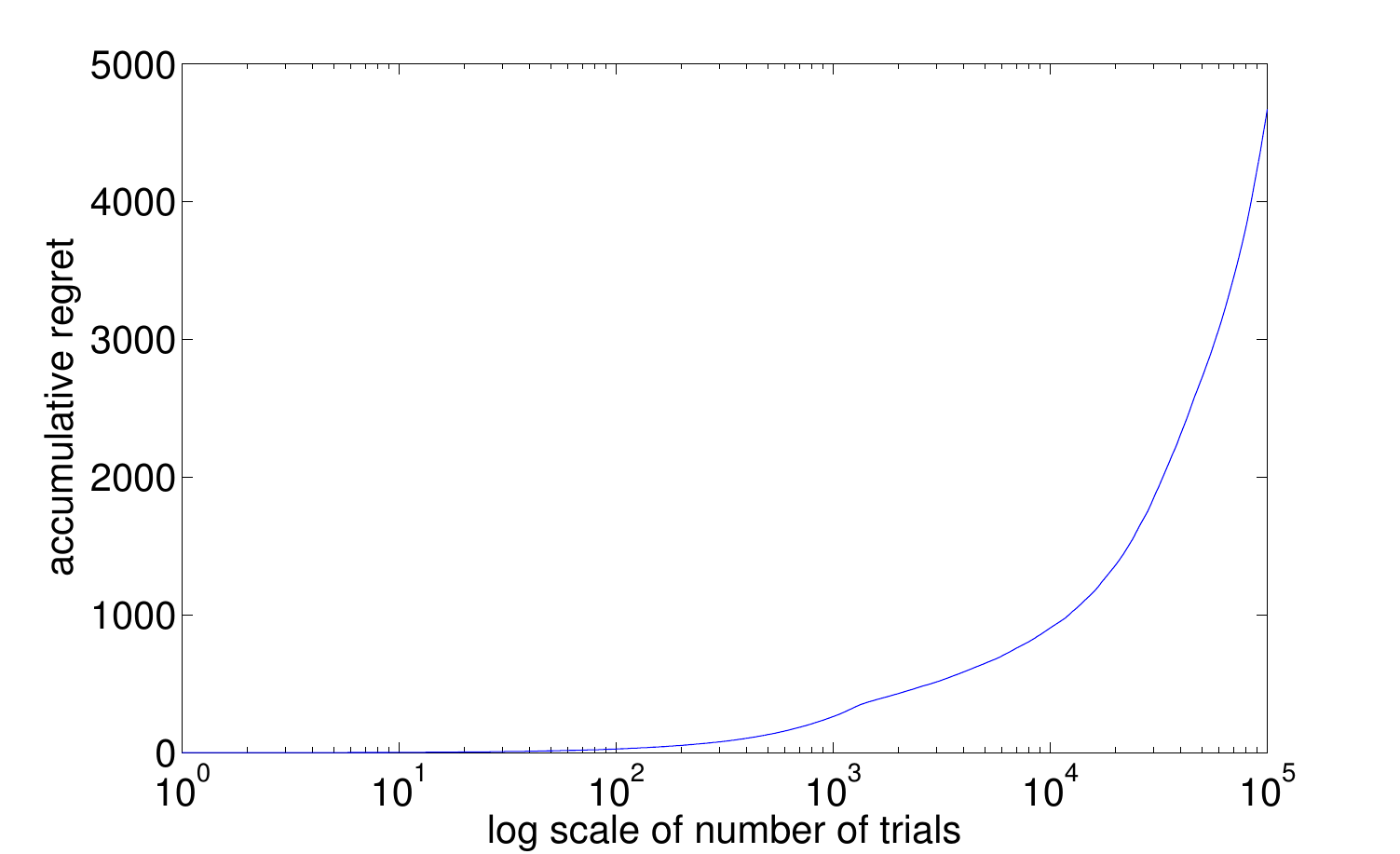}&
\!\!\!\!\!\!\includegraphics[height=2.1in]{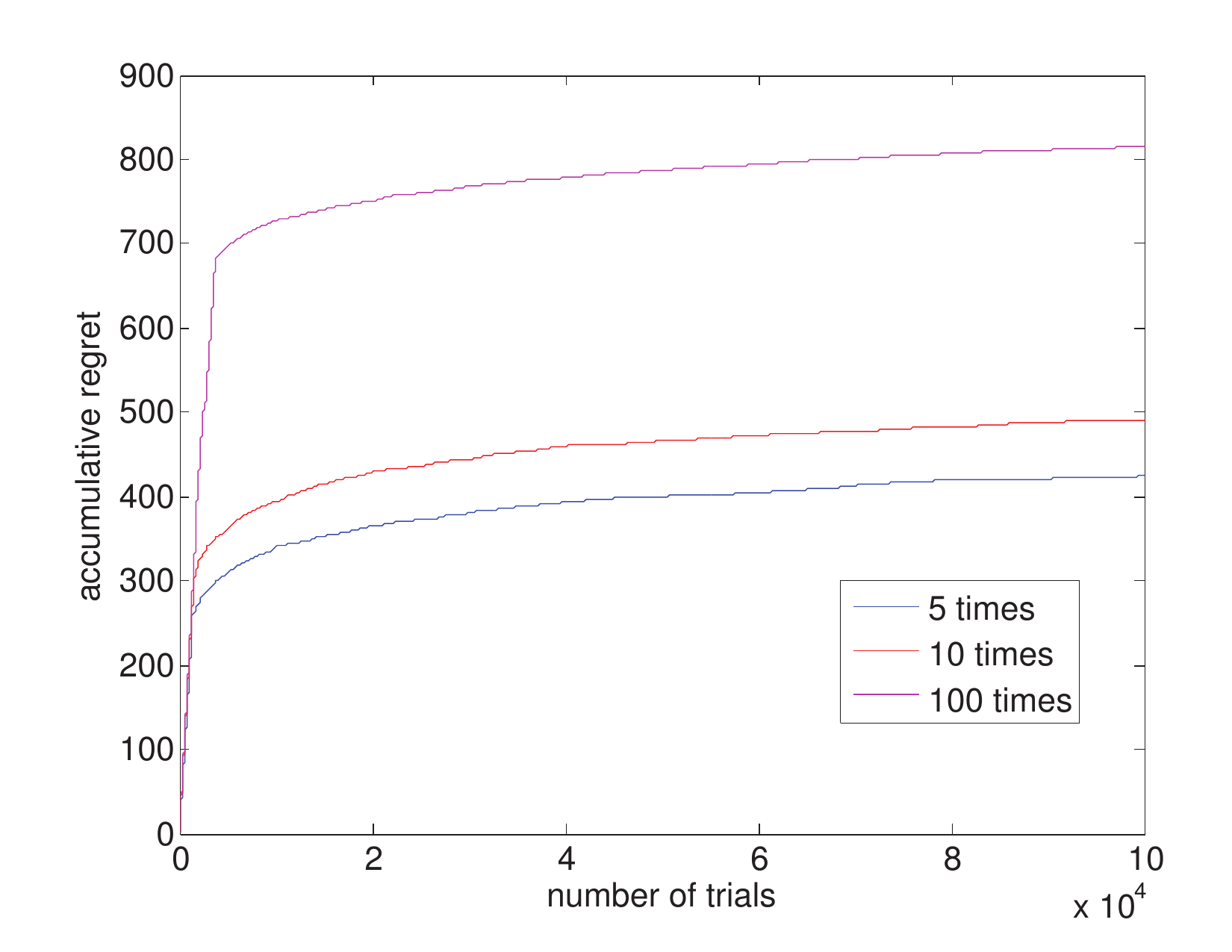}
\end{tabular}
\vspace*{-0.5cm}
\caption{\small{(a) Regret over T when $x_t$ is from i.i.d.  (b) Regret over T when $x_t$ is not from i.i.d.}\vspace*{-.5cm}}
\vspace*{-0.5cm}
\label{figs}
\end{figure}
 
 
 \vspace*{-0.3cm}
\section{Adversarial Setting}\label{sec:discussion}
\vspace*{-0.3cm}
 
In the stochastic setting, the richness of the subset of $\reals^d$ spanned by the observed contexts is related to the skewness of the distribution $\mathcal{D}$. The fact that the bound in Theorem~\ref{th:main_theorem} depends on $\Sigma_{\min}$ and that the regret increases as this value becomes smaller indicates that our approach does not yield a $O(\log T)$ regret for the adversarial setting, where an adversary choses the contexts and can, for example, generate $\{x_t\}$ from a
sequence of stochastic processes with decreasing $\Sigma_{\min}(t)$.

In particular, the main difficulty in using a linear regression, and the reason why our result depends on $\Sigma_{\min}$, is related to the dependency of our estimation of $x^{\dagger}_t \theta_a$ on $\frac{1}{|\Tatm| } X^{\dagger}_{\Tatm} X_{\Tatm}$. It is not hard to show that the error in approximating  $x^{\dagger}_t \theta_a$ with  $x^{\dagger}_t \hat{\theta}_a$ is proportional to
\begin{equation}\label{eq:path_mean}
\sqrt { x^{\dagger}_t \left(  \lambda_n I  + \frac{1}{n}  X^{\dagger}_{\T} X_{\T} \right)^{-2} x_t}.
\end{equation}
This implies that, even if a given context has been observed relatively often in the past, the algorithm can ``forget'' it because of the \emph{mean} over contexts that is being used to produce estimates of $x^{\dagger}_t \theta_a$ (the mean shows up
in \eqref{eq:path_mean} as $\frac{1}{n}  X^{\dagger}_{\T} X_{\T}$). 

The effect of this phenomenon on the performance of Algorithm~\ref{cegreedy} can be readily seen in the following pathological example. Assume that $ \X  = \{  (1, 1), (1, 0) \} \subset \reals^2$. Assume that the contexts arrive in the following way: $(1, 1)$ appears with probability $1/I$ and $(1, 0)$ appears with probability $1 - 1/I$. The correlation matrix for this stochastic process is $\{  (1, 1/I), (1/I, 1/I ) \}$ and its minimum eigenvalue scales like $O(1/I)$. Hence, the regret scales as $O (I^2 \log T)$. If $I$ is allowed to slowly grow with $t$, we expect that our algorithm will not be able to guarantee a logarithmic regret (assuming that our upper bound is tight). In other words, although $(1, 1)$ might have appeared a sufficient number of times for us to be able to predict the expected reward for this context, Algorithm~\ref{cegreedy} performs poorly since the mean \eqref{eq:path_mean} will the `saturated' with the context $(1, 0)$ and forget about $(1, 1)$.

One solution for this problem is to ignore some past contexts when building an estimate for $x^{\dagger}_t \theta_a$, by including in the mean \eqref{eq:path_mean} past contexts that are closer in direction to the current context $x_t$. Having this in mind, and building on the ideas of~\cite{auer2002tradeoffs}, we propose the UCB-type Algorithm \ref{cucb}.
\begin{algorithm}[!t]
\caption{Contextual UCB}
\label{cucb}
\begin{algorithmic}
\FOR{$t = 1$ to $p$}
\STATE $a \leftarrow 1 + (t \mod K)$; Play arm $a$; $\Tat \leftarrow \Tatm \cup \{t\}$
\ENDFOR
\FOR{$t= p+1 $ to $T$}
\FOR{$a \in \A$}
\STATE
$c_{a,t} \leftarrow \displaystyle{\min_{\T \subset \Tatm}} \frac{\log t}{| \T |} x^{\dagger}_t \left(  \lambda_n I  + \frac{1}{n}  X^{\dagger}_{\T} X_{\T} \right)^{-2} x_t$
\STATE $\T^* \leftarrow$ subset of $\Tatm$ that achieves the minimum; $n \leftarrow |\T^*|$
\STATE
Get $\hat{\theta}_a$ as the solution to the linear system: $\left( \lambda_n I + \frac{1}{n} X^{\dagger}_{\T} X_{\T}  \right) \hat{\theta}_a  =  \left( \frac{1}{n} X^{\dagger}_{\T} r_{\T} \right)$
\ENDFOR
\STATE Play arm $a_t = \arg \max_a x^{\dagger}_t \hat{\theta}_a +  \sqrt{c_{a,t}}$; Set $\Tat \leftarrow \Tatm \cup \{t\}$
\ENDFOR 
\end{algorithmic}
\end{algorithm}
\vspace*{+0.3cm}

It is straightforward to notice that this algorithm cannot be implemented in an efficient way. In particular, the search for $\T^* \subset \Tatm$ has a computational complexity exponential  in $t$. The challenge is to find an efficient way of approximating $\T^*$ efficiently. This can be done by either reducing the size of $\Tatm$ -- the history from which one wants to extract $\Tatm$ -- by not storing all events in memory (for example, if we can guarantee that $|\Tat| = O(\log t)$ then the complexity of the above algorithm at time step $t$ is $O(t)$), or by finding an efficient algorithm of approximating the minimization over the $\Tatm$ (or both).  It remains an open problem to find such an approximation scheme and to prove that it achieves $O(\log T)$ regret for a setting more general than the i.i.d. contexts considered in this paper.

\vspace*{-0.2cm}
\section{Conclusions}
\label{sec:conclusion}
\vspace*{-0.3cm}

We introduced an $\epsilon$-greedy type of algorithm that provably achieves logarithmic regret 
for the contextual multi-armed bandits problem with linear payoffs in the stochastic setting.
Our online algorithm is both fast and uses small space. 
In addition, our bound on the regret scales nicely with dimension of the contextual variables, $O(poly(d) \log T)$.
By means of numerical simulations we illustrate how the stochasticity of the contexts is important for our bound to hold.
In particular, we show how to construct a scenario for which our algorithm does not give logarithmic regret.
The reason for this amounts to the fact that the mean $n^{-1} X^{\dagger}_\T X_\T$ that is used in estimating the
parameters $\theta_a$ can ``forget'' previously observed contexts. Because of this, it remains an open problem
to show that there are efficient algorithms that achieve $O(poly(d) \log T)$ under reward separation ($\Delta_{\min} > 0$) in the non-stochastic setting. We believe that a possible solution might be constructing a variant of our algorithm where in $n^{-1} X^{\dagger}_\T X_\T$ we use a more careful average of past observed contexts give the current observed context. In addition, we leave it open to produce simple and efficient online algorithms for multi-armed bandit problems under rich context models, like the one we have done here for linear payoff. 
 
\vspace*{-0.3cm}
\section{Acknowledgement}
\vspace*{-0.3cm}
This work was sponsored by the NSF Grant 1161151: AF: Sparse Approximation: Theory and Extensions.

\eat{

}
\bibliographystyle{splncs03}
\bibliography{bandit}

 \end{document}